\newtheorem{pro}{\bf Property}
\newtheorem{remarks}{\bf Remark}
\newlength{\aligntop}
\newlength{\alignbot}
\renewenvironment{align}{%
  \vspace{\aligntop}
  \start@align\@ne\st@rredfalse\m@ne
}{%
  \math@cr \black@\totwidth@
  \egroup
  \ifingather@
    \restorealignstate@
    \egroup
    \nonumber
    \ifnum0=`{\fi\iffalse}\fi
  \else
    $$%
  \fi
  \ignorespacesafterend%
  \vspace{\alignbot}\par\noindent
}
\begin{document}

\title{{\Huge Distributed Coalition Formation Games for Secure Wireless Transmission}\thanks{This work was done, in part, during the stay of Walid Saad at the Coordinated Science Laboratory, University of Illinois at Urbana-Champaign and was supported by the Research Council of Norway through projects 183311/S10, 176773/S10, and 18778/V11 and by NSF grants CNS-0905556 and CNS-0910461. A preliminary version of this paper appears in the Proceedings of the 7th International Symposium on Modeling and Optimization in Mobile, Ad Hoc and Wireless Networks [18].}
}


\author{ Walid Saad \and Zhu Han \and Tamer Ba\c{s}ar \and M{\'e}rouane Debbah \and Are Hj{\o}rungnes}


\institute{W. Saad \and A. Hj{\o}rungnes \at
              UNIK - University Graduate Center\\
              University of Oslo, Norway\\
              \email{\{saad,arehj\}@unik.no}           
           \and
           Z. Han \at
              Electrical and Computer Engineering Department\\
              University of Houston, TX, USA\\
              \email{zhan2@mail.uh.edu}           
              \and
           M. Debbah \at
              Alcatel Lucent Chair\\
              SUPELEC, France\\
              \email{merouane.debbah@supelec.fr}           
              \and
           T. Ba\c{s}ar  \at
              Coordinated Science Laboratory\\
              University of Illinois at Urbana Champaign, USA\\
              \email{basar1@illinois.edu}           
              \and
}

\date{Received: date / Accepted: date\vspace{-0.7cm}}

\maketitle
\begin{abstract}
Cooperation among wireless nodes has been recently proposed for improving the physical layer~(PHY) security of wireless transmission in the presence of multiple eavesdroppers. While existing PHY security literature  answered the question ``what are the link-level \emph{secrecy rate} gains from cooperation?'', this paper attempts to answer the question of ``how to achieve those gains in a practical decentralized wireless network and in the presence of a cost for information exchange?''. For this purpose, we model the PHY security cooperation problem as a coalitional game with non-transferable utility and propose a distributed algorithm for coalition formation. Through the proposed algorithm, the wireless users can cooperate and self-organize into disjoint independent coalitions, while maximizing their secrecy rate taking into account the security costs during information exchange. We analyze the resulting coalitional structures for both decode-and-forward and amplify-and-forward cooperation and study how the users can adapt the network topology to environmental changes such as mobility. Through simulations, we assess the performance of the proposed algorithm and show that, by coalition formation using decode-and-forward, the average secrecy rate per user is increased of up to $25.3\%$ and $24.4\%$ (for a network with $45$~users) relative to the non-cooperative and amplify-and-forward cases, respectively.
\keywords{physical layer security \and coalitional games\and game theory \and secure communication}\vspace{-0.5cm}
\end{abstract}
\section{Introduction}\vspace{-0.4cm}
With the recent emergence of ad hoc and decentralized networks, higher-layer security techniques such as encryption have become hard to implement. This led to an increased attention on studying the ability of the physical layer~(PHY)  to provide secure wireless communication. The main idea is to exploit the wireless channel PHY characteristics such as fading or noise for improving the reliability of wireless transmission. This reliability is quantified by the rate of secret information sent from a wireless node to its destination in the presence of eavesdroppers, i.e., the so called \emph{secrecy rate}. The maximal achievable secrecy rate is referred to as the \emph{secrecy capacity}. The study of this security aspect began with the pioneering work of Wyner over the wire-tap channel \cite{WY00} and was followed up in \cite{WY01,WY02} for the scalar Gaussian wire-tap channel and the broadcast channel, respectively.

Recently, there has been a growing interest in carrying out these studies unto the wireless and the multi-user channels \cite{WI00,WI01,WI02,WI04,WI06,WI05,WI07}. For instance, in \cite{WI00} and \cite{WI01}, the authors study the secrecy capacity region for both the Gaussian and the fading broadcast channels and propose optimal power allocation strategies. In \cite{WI02}, the secrecy level in multiple access channels from a link-level perspective is studied. Further,  multiple antenna systems have been proposed in \cite{WI06} for ensuring a non-zero secrecy capacity. The work in \cite{WI05,WI07} presents a performance analysis for using cooperative beamforming (with no cost for cooperation), with decode-and-forward and amplify-and-forward relaying, to improve the secrecy rate of a single cluster consisting of one source node and a number of relays. Briefly, the majority of the existing literature is devoted to the information theoretic analysis of link-level performance gains of secure communications with no information exchange cost, notably when a source node cooperate with some relays as in \cite{WI05,WI07}. While this literature studied the performance of some cooperative schemes, no work seems to have investigated how a number of users, each with its own data, can interact and cooperate at network-wide level to improve their secrecy rate.

 The main contribution of this work is to propose distributed cooperation strategies, through coalitional game theory \cite{WSTUT}, which allow to study the interactions between a network of users that seek to secure their communication in the presence of multiple eavesdroppers. Another major contribution is to study the impact on the network topology and dynamics of the inherent tradeoff that exists between the PHY security cooperation gains in terms of secrecy rate and the information exchange costs. In other words, while the earlier work answered the question ``what are the secrecy rate gains from cooperation?'', here, we seek to answer the question of ``how to achieve those gains in a practical decentralized wireless network and in the presence of a cost for information exchange?''. We model the problem as a non-transferable coalitional game and propose a distributed algorithm for autonomous coalition formation based on well suited concepts from cooperative games. Through the proposed algorithm, each user autonomously decides to form or break a coalition for maximizing its utility in terms of secrecy rate while accounting for the loss of secrecy rate during information exchange. We show that independent disjoint coalitions form in the network, due to the cooperation cost, and we study their properties for both the decode-and-forward and amplify-and-forward cooperation models.Simulation results show that, by coalition formation using decode-and-forward, the average secrecy rate per user is increased of up to $25.3\%$ and $24.4\%$ relative to the non-cooperative and amplify-and-forward cases, respectively. Further, the results show how the users can self-organize and adapt the topology to mobility.

The rest of this paper is organized as follows: Section~\ref{sec:systemmodel} presents
the system model. Section~\ref{sec:gf} presents the game formulation and properties. In Section~\ref{sec:coalform} we devise the coalition formation algorithm. Simulation results are presented and analyzed in Section \ref{sec:sim}. Finally, conclusions are drawn in
Section \ref{sec:conc}.\vspace{-0.7cm}
\section{System Model}\vspace{-0.4cm}
\label{sec:systemmodel}
Consider a network having $N$  transmitters
(e.g.~mobile users) sending data to $M$ receivers (destinations)
in the presence of $K$ eavesdroppers that seek to tap into the transmission of the users. Users, receivers and eavesdroppers are unidirectional-single-antenna nodes. We define $\mathcal{N}=\{1,\ldots,N\}$, $\mathcal{M}=\{1,\ldots,M\}$ and $\mathcal{K}=\{1,\ldots,K\}$ as the sets of users, destinations, and eavesdroppers, respectively. In this work, we consider only the case of multiple eavesdroppers, hence, we have $K > 1$. Furthermore, let $h_{i,m_i}$ denote the complex baseband channel gain between user $i \in \mathcal{N}$ and its destination $m_i \in \mathcal{M}$ and $g_{i,k}$ denote the channel gain between user $i \in \mathcal{N}$ and eavesdropper $k \in \mathcal{K}$. We consider a line of sight channel model with $h_{i,m_i}=d_{i,m_i}^{-\frac{\mu}{2}}e^{j\phi_{i,m_i}}$ with $d_{i,m_i}$ the distance between user $i$ and its destination $m_i$, $\mu$ the pathloss exponent, and $\phi_{i,m_i}$ the phase offset. A similar model is used for the user-eavesdropper channel. Note that other channel models can also be accommodated.

Further, we consider a TDMA transmission, whereby, in a non-cooperative manner, each user occupies a single time slot. Within a single slot, the amount of reliable information transmitted from the user $i$ occupying the slot to its destination $m_i$ is quantified through the \emph{secrecy rate} $C_{i,m_i}$ defined as follows \cite{WI00}:
\begin{equation}\label{eq:sec}
C_{i,m_i} = \left( C^{d}_{i,m_i} - \max_{1\le k \le K}{C^{e}_{i,k}}\right)^{+},
\end{equation}
\noindent where $C^{d}_{i,m_i}$ is the capacity for the transmission between user $i$ and its destination $m_i \in \mathcal{M}$, $C^{e}_{i,k}$ is the capacity of user $i$ at the eavesdropper $k \in \mathcal{K}$, and $a^{+} \triangleq \max{(a,0)}$. Note that the secrecy rate in (\ref{eq:sec}) is shown to be achievable  in \cite{ACH} using Gaussian inputs.
\begin{figure}[!t]
\begin{center}
\includegraphics[width=0.6\textwidth]{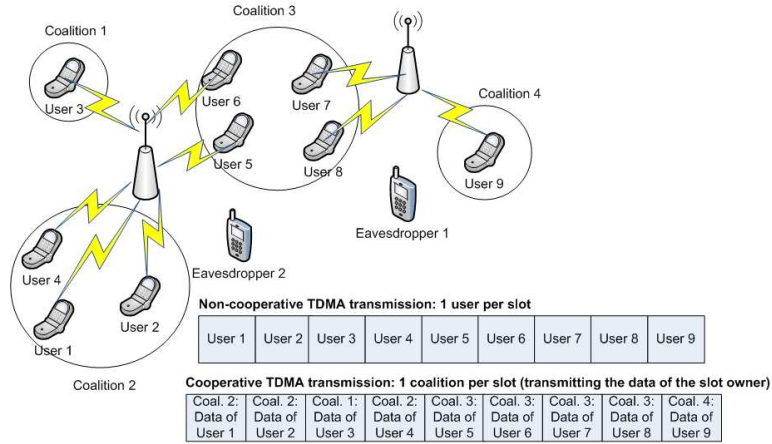}
\end{center}\vspace{-0.6cm}
\caption{System model for physical layer security coalitional game.}\vspace{-6mm}%
\label{f: system_model}%
\end{figure}

In a non-cooperative approach, due to the broadcast nature of the wireless channel, the transmission of the users can be overheard by the eavesdroppers which reduces their secrecy rate as clearly expressed in (\ref{eq:sec}). For improving their performance and increasing their secrecy rate
, the users can collaborate by forming coalitions. Within every coalition, the users can utilize collaborative beamforming techniques for improving their secrecy rates. In this context, every user $i$ member of a coalition $S$ can cooperate with its partners in $S$ by dividing its slot into two durations:
\begin{enumerate}
\item In the first duration, user $i$ broadcasts its data to the other members of coalition $S$.
    \item In the second duration, coalition $S$ performs collaborative beamforming. Thus, all the members of coalition $S$ relay a weighted version of user $i$'s signal to its destination.
 \end{enumerate}
 Although finding an optimal cooperation scheme that maximizes the secrecy rate is quite complex \cite{WI05}, one approach for cooperation is to null the signal at the eavesdroppers, i.e., impose $C^{e}_{i,k}=0,\forall k \in \mathcal{K}$, hence, improving their secrecy rate as compared to the non-cooperative rate in (\ref{eq:sec}) \cite{WI05}. Each coalition $S \subseteq \mathcal{N}$ that forms in the network is able to transmit within all the time slots previously held by its users. Thus, in the presence of cooperating coalitions, the TDMA system schedules one coalition per time slot. During a given slot, the coalition acts as a single entity for transmitting the data of the user that owns the slot. Fig.~\ref{f: system_model} shows an illustration of this model for $N=9$~users, $M=2$~destinations, and $K=2$~eavesdroppers. 

Furthermore, we define a fixed transmit power \emph{per time slot} $\tilde{P}$ which constrains \emph{all the users} that are transmitting within a given slot. In a non-cooperative manner, this power constraint applies to the single user occupying the slot, while in a cooperative manner this \emph{same} power constraint applies to the entire coalition occupying the slot. Such a power assumption is typical in TDMA systems comprising mobile users and is a direct result of ergodicity and the time varying user locations \cite{GC01,GC00,WS00}. For every coalition $S$, during the time slot owned by user $i\in S$, user $i$ utilizes a portion of the available power $\tilde{P}$ for information exchange (first stage) while the remaining portion $P_i^{S}$ is used by the coalition $S$ to transmit the actual data to the destination $m_i$ of user $i$ (second stage). For information exchange, user $i\in S$ can broadcast its information to the farthest user $\hat{i} \in S$, by doing so all the other members of $S$ can also obtain the information due to the broadcast nature of the wireless channel. This information exchange incurs a power cost  $\bar{P}_{i,\hat{i}}$ given by
\begin{equation}\label{eq:scost}
\bar{P}_{i,\hat{i}} = \frac{\nu_0 \cdot \sigma^{2}}{|q_{i,\hat{i}}|^2} ,
\end{equation}
where $\nu_0$ is a target average signal-to-noise ratio (SNR) for information exchange, $\sigma^{2}$ is the noise variance and $q_{i,\hat{i}}$ is the channel gain between users $i$ and $\hat{i}$. The remaining power that coalition $S$ utilizes for the transmission of the data of user $i$ during the remaining time of this user's slot is
\begin{equation}\label{eq:pcost}
P_i^{S}=(\tilde{P} - \bar{P}_{i,\hat{i}})^{+}.
\end{equation}

For every coalition $S$, during the transmission of the data of user $i$ to its destination, the coalition members can cooperate, using either decode-and-forward~(DF) or amplify-and-forward~(AF), and, hence, weigh their signals in a way to \emph{completely null} the signal at the eavesdroppers. In DF, the coalition members that are acting as relays decode the received signal in the information exchange phase, then re-encode it before performing beamforming. In contrast, for AF, coalition members that are acting as relays perform beamforming by weighing the noisy version of the received signal in the information exchange phase. For any coalition $S$ the signal weights and the ``user-destination'' channels are represented by the $|S|\times 1$  vectors $\boldsymbol{w}_S=[w_{i_1},\ldots,w_{i_{|S|}}]^{H}$ and  $\boldsymbol{h}_S = [h_{i_1,m_1},\ldots,h_{i_{|S|},m_{|S|}}]^{H}$, respectively.
 By nulling the signals at the eavesdropper through DF cooperation within coalition $S$, the secrecy rate achieved by user $i\in S$ at its destination $m_i$ during user $i$'s time slot becomes \cite[Eq. (14)]{WI05}
\begin{equation}\label{eq:seccoop}
C^{S,\textrm{DF}}_{i,m_i} = \frac{1}{2}\log_2{\left(1+\frac{(\boldsymbol{w}_S^{*,\textrm{DF}})^{H}\boldsymbol{R}_S\boldsymbol{w}_S^{*,DF}}{\sigma^2}\right)},
\end{equation}
where $\boldsymbol{R}_S =\boldsymbol{h}_S\boldsymbol{h}_S^{H}$, $\sigma^2$ is the noise variance, and  $\boldsymbol{w}^{*,\textrm{DF}}_{S}$ is the weight vector that maximizes the secrecy rate while nulling the signal at the eavesdropper with DF cooperation and can be found using \cite[Eq.~(20)]{WI05}. In (\ref{eq:seccoop}), the factor $\frac{1}{2}$ accounts for the fact that half of the slot of user $i$ is reserved for information exchange. 

For AF, we define, during the transmission slot of a user $i \in S$ member of a coalition $S$, the $|S|\times 1$ vector $\boldsymbol{a}_S^{i}$ with every element $a_{S,j}^{i} = \sqrt{\bar{P}_{i,\hat{i}}}q_{i,j}h_{j,m_j}, \ \forall j \neq i$ ($q_{i,j}$ is the channel between users $i$ and $j$ and $\bar{P}_{i,\hat{i}}$ is the power used by user $i$ for information exchange as per (\ref{eq:scost})) and $a_{S,i}^{i}=\sqrt{\bar{P}_{i,\hat{i}}}h_{i,m_i}$ and the $|S|\times|S|$ diagonal matrix $\boldsymbol{U}_S^{i}$ with every diagonal element $u_{S,j,j}^{i}=|h_{j,m_j}|^2 \ \forall j \neq i$ and $u_{S,i,i}^{i}=0$. Given these definitions and by nulling the signals at the eavesdropper through AF cooperation within coalition $S$, the secrecy rate achieved by user $i\in S$ at its destination $m_{i}$ during user $i$'s time slot becomes \cite[Eq. (3)]{WI07}
\begin{equation}\label{eq:seccoopAF}
C^{S,\textrm{AF}}_{i,m_i} = \frac{1}{2}\log_2{\left(1+\frac{(\boldsymbol{w}_S^{*,\textrm{AF}})^{H}\boldsymbol{R}_a\boldsymbol{w}_S^{*,\textrm{AF}}}{(\boldsymbol{w}_S^{*,\textrm{AF}})^{H}\boldsymbol{U}_S^i\boldsymbol{w}_S^{*,\textrm{AF}}+1)\sigma^2}\right)},
\end{equation}
where $\boldsymbol{R}_a =\boldsymbol{a}_S^i(\boldsymbol{a}_S^i)^{H}$,  and  $\boldsymbol{w}^{*,\textrm{AF}}_{S}$ is the weight vector that maximizes the secrecy rate while nulling the signal at the eavesdropper with AF cooperation and can be found using \cite[Eqs.(14)-(15)]{WI07}. Note that for AF, as seen in (\ref{eq:seccoopAF}) there is a stronger dependence on the channels (through the matrix $\boldsymbol{R}_a$) between the cooperating users in both the first and second phase of cooperation, unlike in DF, where this dependence is solely through the power in (\ref{eq:scost}) during the information exchange phase. Further, for AF, as the cooperating users amplify a noisy version of the signal, the noise is also amplified, which can reduce the cooperation gains, as seen through the term $(\boldsymbol{w}_S^{*,\textrm{AF}})^{H}\boldsymbol{U}_S^i\boldsymbol{w}_S^{*,\textrm{AF}}$.

Further, it must be stressed that, although the models for AF and DF cooperation in (\ref{eq:seccoop}) and (\ref{eq:seccoopAF}) are inspired from \cite{WI05,WI07}, our work and contribution differ significantly from \cite{WI05,WI07}. While the work in \cite{WI05,WI07} is solely dedicated to finding the optimal weights in (\ref{eq:seccoop}) and (\ref{eq:seccoopAF}), and presenting a link-level performance analysis for a single cluster of neighboring nodes with no cost for cooperation, our work seeks to perform a network-level analysis by modeling the interactions among a network of users that seek to cooperate,  in order to improve their performance, using either the DF or AF protocols in the presence of costs for information exchange. Hence, the main focus of this paper is modeling the user's behavior, studying the network dynamics and topology, and analyzing the network-level aspects of cooperation in PHY security problems. In this regard, the remainder of this paper is devoted to investigate how a network of users can cooperate, through the protocols described in this section, and improve the security of their wireless transmission, i.e., their secrecy rate.

Finally, note that, in this paper, we assume that the users have perfect knowledge of the channels to the eavesdroppers which is an assumption commonly used in most PHY security related literature, and as explained in \cite{CSI00} this channel information  can be obtained by the users through a constant monitoring of the behavior of the eavesdroppers. Alternatively, the eavesdroppers in this work can also be seen as areas where the transmitters suspect the presence of malicious eavesdropping nodes and, hence, need to secure these locations. Hence, our current analysis can serve as an upper bound for future work where the analysis pertaining to the case where the eavesdroppers and their locations are not known will be tackled (in that case although the cooperation model needs to be modified, the PHY security coalitional game model presented in the following sections can be readily applied).\vspace{-0.5cm}
\section{Physical Layer Security as A Coalitional Game}\label{sec:gf}\vspace{-0.3cm}
The proposed PHY security problem can be modeled as a $(\mathcal{N},V)$ coalitional game with a non-transferable utility \cite{Game_theory2,WSTUT} where $V$ is a mapping such that for every coalition $S \subseteq \mathcal{N}$, $V(S)$ is a closed convex subset of $\mathbb{R}^{|S|}$  that contains the payoff vectors that players in $S$ can achieve. Thus, given a coalition $S$ and denoting by $\phi_i(S)$ the payoff of user $i \in S$ during its time slot, we define the coalitional value set, i.e., the mapping $V$ as follows
\begin{align}\label{eq:util}
V(S)=\{\boldsymbol{\phi}(S) \in \mathbb{R}^{|S|}|\ \ \forall i \in S \ \phi_i(S) = (v_i(S)-c_i(S))^{+}\nonumber\\ \textrm{ if } P_i^{S} >0, \textrm{ and } \phi_i(S) = -\infty\textrm{ otherwise.}\},
\end{align}
where $v_i(S)=C^{S}_{i,m_i} $ is the gain in terms of secrecy rate for user $i \in S$ given by (\ref{eq:seccoop}) while taking into account the available power $P_i^{S}$ in (\ref{eq:pcost}) and $c_i(S)$ is a secrecy cost function that captures the loss for user $i \in S$, in terms of secrecy rate, that occurs during information exchange. Note that, when \emph{all} the power is spent for information exchange, the payoff $\phi_i(S)$ of user $i$ is set to $-\infty$ since, in this case, the user has clearly no interest in cooperating.

With regard to the secrecy cost function $c_i(S)$, when a user $i \in S$ sends its information to the farthest user $\hat{i} \in S$ using a power level $\bar{P}_{i,\hat{i}}$, the eavesdroppers can overhear the transmission. This security loss is quantified by the capacity at the eavesdroppers resulting from the information exchange and which, for a particular eavedropper $k \in \mathcal{K}$, is given by
$\hat{C}^{e}_{i,k} = \frac{1}{2} \log{(1+\frac{\bar{P}_{i,\hat{i}} \cdot |g_{i,k}|^{2}}{\sigma^2})}$
and the cost function $c(S)$ can be defined as
\begin{equation}\label{eq:cost2}
c_i(S)=\max{(\hat{C}^{e}_{i,1},\ldots,\hat{C}^{e}_{i,K})}.
\end{equation}

In general, coalitional game based problems seek to characterize the properties and stability of the grand coalition of all players since it is generally assumed that the grand coalition maximizes the utilities of the players \cite{Game_theory2}. In our case, although cooperation improves the secrecy rate as per (\ref{eq:util}) for the users in the TDMA network; the utility in (\ref{eq:util}) also accounts for two types of cooperation costs:(i)- The fraction of power spent for information exchange as per (\ref{eq:pcost}) and, (ii) the secrecy loss during information exchange as per (\ref{eq:cost2}) which can  strongly limit the cooperation gains. Therefore, for the proposed $(\mathcal{N},v)$ coalitional game we have:
\begin{pro}\label{prop:2}
For the proposed $(\mathcal{N},V)$ coalitional game, the grand coalition of all the users \emph{seldom} forms due to the various costs for information exchange. Instead, disjoint independent coalitions will form in the network.
\end{pro}
\begin{proof}
The proof is found in \cite[Property~2]{WIOPT}.\vspace{-0.3cm}
\end{proof}

Due to this property, traditional solution concepts for coalitional games, such as the core \cite{Game_theory2}, may not be applicable \cite{WSTUT}. In fact, in order for the core to exist, as a solution concept, a coalitional game must ensure that the grand coalition, i.e., the coalition of all players will form. However, as seen in Figure~\ref{f: system_model} and corroborated by Property~\ref{prop:2}, in general, due to the cost for coalition formation, the grand coalition will not form. Instead, independent and disjoint coalitions appear in the network as a result of the collaborative beamforming process. In this regard, the proposed game is classified as a \emph{coalition formation game} \cite{WSTUT}, and the objective is to find the coalitional structure that will form in the network, instead of finding only a solution concept, such as the core, which aims mainly at stabilizing the grand coalition.

Furthermore, for the proposed $(\mathcal{N},V)$ coalition formation game, a constraint on the coalition size, imposed by the nature of the cooperation protocol exists as follows:
\begin{remarks}
For the proposed $(\mathcal{N},V)$ coalition formation game,  the size of any coalition $S \subseteq \mathcal{N}$ that will form in the network must satisfy $|S| > K$ for both DF and AF cooperation.
\end{remarks}

This is a direct result of the fact that, for nulling $K$ eavesdroppers, at least $K+1$ users must cooperate, otherwise, no weight vector can be found to maximize the secrecy rate while nulling the signal at the eavesdroppers.\vspace{-0.2cm}

\section{Distributed Coalition Formation Algorithm}\label{sec:coalform}\vspace{-0.4cm}
\subsection{Coalition Formation Algorithm}\label{sec:mergeandsplit}\vspace{-0.3cm}
Coalition formation has recently attracted increased attention in game theory \cite{WSTUT,CF04,KA01}. The goal of coalition formation games is to find algorithms for characterizing the coalitional structures that form in a network where the grand coalition is not optimal. For constructing a coalition formation process suitable to the proposed $(\mathcal{N},V)$ PHY security cooperative game, we require the following definitions \cite{WSTUT,KA01}
 \begin{definition}
A \emph{collection} of coalitions, denoted by $\mathcal{S}$, is defined as the set $\mathcal{S} = \{S_{1},\ldots,S_{l}\}$ of mutually disjoint coalitions $S_{i} \subset \mathcal{N}$.  In other words, a collection is any arbitrary group of disjoint coalitions $S_i$ of $\mathcal{N}$ not necessarily spanning all players of $ \mathcal{N}$. If the collection spans \emph{all} the players of $ \mathcal{N}$; that is $\bigcup_{j=1}^{l} S_j =  \mathcal{N}$, the collection is a \emph{partition} of $\mathcal{N}$.
 \end{definition}
\begin{definition}
A preference operator or \emph{comparison relation} $\rhd$ is an order defined for comparing two collections $\mathcal{R} = \{R_{1},\ldots,R_{l}\}$ and
$\mathcal{S} = \{S_{1},\ldots,S_{p}\}$ that are partitions of the same subset $\mathcal{A} \subseteq \mathcal{N}$ (i.e.~same players in $\mathcal{R}$ and $\mathcal{S}$). Therefore, $\mathcal{R} \rhd \mathcal{S}$ implies that the way $\mathcal{R}$ partitions $\mathcal{A}$ is preferred to the way $\mathcal{S}$ partitions $\mathcal{A}$.
\end{definition}

For the proposed PHY security coalition formation game, an individual value order, i.e. an order which compares the individual payoffs of the users, is needed due to the non-transferable utility of the game. For this purpose, for the proposed game, we utilize the following order for defining the preferences of the users

%
\begin{definition}
Consider two collections $\mathcal{R} = \{R_{1},\ldots,R_{l}\}$ and
$\mathcal{S} = \{S_{1},\ldots,S_{m}\}$ that are partitions of the same subset $\mathcal{A} \subseteq \mathcal{N}$ (same players in $\mathcal{R}$ and $\mathcal{S}$). For a collection $\mathcal{R} = \{R_{1},\ldots,R_{l}\}$, let the utility of a player $j$ in a coalition $R_j \in \mathcal{R}$ be denoted by $\Phi_j(\mathcal{R})=\phi_j(R_j) \in V(R_j)$. $\mathcal{R}$ is preferred over $\mathcal{S}$ by \emph{Pareto order}, written as $\mathcal{R} \rhd \mathcal{S}$, iff
\setlength{\aligntop}{-0.2em}
\setlength{\alignbot}{-2\baselineskip}
\begin{align}\label{eq:Pareto}
\mathcal{R} \rhd \mathcal{S} \Longleftrightarrow \{\Phi_{j}(\mathcal{R}) \ge
\Phi_{j}(\mathcal{S}) \  \forall \ j \in \mathcal{R},\mathcal{S}\},\nonumber
\\\textrm{ with \emph{at
least one strict inequality} ($>$) for a player } k. \nonumber
\end{align}
\setlength{\aligntop}{-0.2em}
\setlength{\alignbot}{-0.8\baselineskip}
\vspace{-0.5cm}
\end{definition}

In other words, a collection is preferred by the players over another collection, if at least one player is able to improve its payoff without hurting the other players. Subsequently, for performing autonomous coalition formation between the users in the proposed PHY security game, we construct a distributed algorithm based on two simple rules denoted as ``merge'' and ``split'' \cite{WSTUT,KA01} defined as follows.
\begin{definition}
\textbf{Merge Rule -} Merge any set of coalitions
$\{S_{1},\ldots,S_{l}\}$ whenever the merged form is preferred by the players, i.e., where
$\{\bigcup_{j=1}^{l}S_{j}\} \rhd \{S_{1},\ldots,S_{l}\}$, therefore, {$\{S_{1},\ldots,S_{l}\}
\rightarrow \{\bigcup_{j=1}^{l}S_{j}\}$}.
\end{definition}
\begin{definition}
\textbf{Split Rule -} Split any coalition
$\bigcup_{j=1}^{l}S_{j}$ whenever a split form is preferred by the players, i.e., where
$\{S_{1},\ldots,S_{l}\} \rhd \{\bigcup_{j=1}^{l}S_{j}\},$
 thus, $\{\bigcup_{j=1}^{l}S_{j}\}
\rightarrow \{S_{1},\ldots,S_{l}\}$.
\end{definition}
 \begin{table}[!t]
  \caption{\vspace*{-0.4em}One round of the proposed PHY security coalition formation algorithm}\vspace*{-0.2em}
    \begin{tabular}{p{10cm}}
      \hline
      \textbf{Initial State} \vspace*{.2em} \\
      \hspace*{1em} The network is partitioned by $\mathcal{T}=\{T_1,\ldots,T_k\}$ (At the beginning\\
      \hspace*{1em}of all time $\mathcal{T}$ = $\mathcal{N}$ = $\{1,\ldots,N\}$ with non-cooperative users). \vspace*{.3em}\\
\textbf{Three phases in each round of the coalition formation algorithm} \vspace*{.2em}\\
\hspace*{1em}\emph{Phase 1 - Neighbor Discovery:}   \vspace*{.1em}\\
\hspace*{2em}a) Each coalition surveys its neighborhood for candidate partners.\vspace*{.1em}\\
\hspace*{2em}b) For every coalition $T_i$, the candidate partners lie in the area \vspace*{.1em}\\
\hspace*{2em}represented by the intersection of  $|T_i|$ circles with centers $j \in T_i$\vspace*{.1em}\\
\hspace*{2em}and radii determined by the distance where the power for \vspace*{.1em}\\
\hspace*{2em}information exchange
does not exceed $\tilde{P}$ for any user\vspace*{.1em}\\
\hspace*{2em}(easily computed through (\ref{eq:scost})).\vspace*{.1em}\\
\hspace*{1em}\emph{Phase 2 - Adaptive Coalition Formation:}   \vspace*{.1em}\\
\hspace*{2em}In this phase, coalition formation using merge-and-split occurs. \vspace*{.2em}\\
\hspace*{3em}\textbf{repeat}\vspace*{.2em}\\
\hspace*{4em}a) $\mathcal{F}$ = Merge($\mathcal{T}$); coalitions in $\mathcal{T}$ decide to merge based on\vspace*{.2em}\\
\hspace*{4em}the algorithm of Section~\ref{sec:mergeandsplit}.\vspace*{.2em}\\ 
\hspace*{4em}b) $\mathcal{T}$ = Split($\mathcal{F}$); coalitions in $\mathcal{F}$ decide to split based on \vspace*{.2em}\\
\hspace*{4em}the Pareto order.\\
\hspace*{3em}\textbf{until} merge-and-split terminates.\vspace*{.2em}\\
\hspace*{1em}\emph{Phase 3 - Secure Transmission:}   \vspace*{.1em}\\
\hspace*{2em}Each coalition's users exchange their information
 and transmit\vspace*{.1em}\\
\hspace*{2em}their data within their allotted slots.\vspace*{.1em}\\
\textbf{The above three phases are repeated periodically during the network operation, allowing a topology that is adaptive to environmental changes such as mobility. } \vspace*{.5em}\\
   \hline
    \end{tabular}\label{tab:alg1}
    \vspace{-0.5cm}
\end{table}

Using the above rules, multiple coalitions can merge into a larger coalition if merging yields a preferred collection based on the Pareto order. This implies that a group of users can agree to form a larger coalition, if at least one of the users improves its payoff without decreasing the utilities of any of the other users. Similarly, an existing coalition can decide to split into smaller coalitions if splitting yields a preferred collection by Pareto order. The rationale behind these rules is that, once the users agree to sign a merge agreement, this agreement can only be broken if all the users approve. This is a family of coalition formation games known as coalition formation games with partially reversible agreements \cite{CF04}. Using the rules of merge and split is highly suitable for the proposed PHY security game due to many reasons. For instance, each merge or split decision can be taken in a distributed manner by each individual user or by each already formed coalition. Further, it is shown in \cite{KA01} that any arbitrary iteration of merge and split rules terminates, hence, these rules can be used as building blocks in a coalition formation process for the PHY security game. 

Accordingly, for the proposed PHY security game, we construct a coalition formation algorithm based on merge-and-split and divided into three phases: Neighbor discovery, adaptive coalition formation, and transmission. In the neighbor discovery phase (Phase~1), each coalition (or user) surveys its environment in order to find possible cooperation candidates. For a coalition $S_k$ the area that is surveyed for discovery is the intersection of  $|S_k|$ circles, centered at the coalition members with each circle's radius given by the maximum distance $\bar{r}_i$ (for the circle centered at $i \in S_k$) within which the power cost for user $i$ as given by (\ref{eq:scost}) does not exceed the total available power $\tilde{P}$. This area is determined by the fact that, if a number of coalitions $\{S_1,\ldots,S_m\}$ attempt to merge into a new coalition $G = \cup_{i=1}^{m}S_i$ which contains a member $i \in G$ such that the power for information exchange needed by $i$ exceeds $\tilde{P}$, then the payoff of $i$ goes to $-\infty$ as per (\ref{eq:util}) and the Pareto order can never be verified. Clearly, as the number of users in a coalition increases, the number of circles increases, reducing the area where possible cooperation partners can be found. This implies that, as the size of a coalition grows, the possibility of adding new users decreases, and, hence, the complexity of performing merge also decreases.

Following Phase~1, the adaptive coalition formation phase~(Phase~2) begins, whereby the users interact for assessing whether to form new coalitions with their neighbors or whether to break their current coalition. For this purpose, an iteration of sequential merge-and-split rules occurs in the network, whereby each coalition decides to merge (or split) depending on the utility improvement that merging (or splitting) yields. 
Starting from an initial network partition $\mathcal{T} =\{T_1,\ldots,T_l\}$ of $\mathcal{N}$, any random coalition (individual user) can start with the merge process. The coalition  $T_i \in  \mathcal{T}$ which debuts the merge process starts by enumerating, sequentially, the possible coalitions, of size greater than $K$ (Remark~1), that it can form with the neighbors that were discovered in Phase~1. On one hand, if a new coalition $\tilde{T}_i$ which is preferred by the users through Pareto order is identified, this coalition will form by a merge agreement of all its members. Hence,  the merge ends by a final merged coalition $T_i^{\text{final}}$ composed of $T_i$ and one or several of coalitions in its vicinity.  On the other hand, if $T_i$ is unable to merge with any of the discovered partners, it ends its search and $T_i^{\text{final}} = T_i$.

The algorithm is repeated for the remaining $T_i \in \mathcal{T}$ until all the coalitions have made their merge decisions, resulting in a final partition $\mathcal{F}$. Following the merge process, the coalitions in the resulting partition $\mathcal{F}$ are next subject to split operations, if any is possible. In the proposed PHY security problem, the coalitions are only interested in splitting into structures that include either singleton users or coalitions of size larger than $K$ or both (Remark~1). Similar to merge, the split is a local decision to each coalition. An iteration consisting of multiple successive merge-and-split operations is repeated until it terminates. The termination of an iteration of merge and split rules is guaranteed as shown in \cite{KA01}. It must be stressed that the merge or split decisions can be taken in a distributed way by the users/coalitions without relying on any centralized entity.

In the final transmission phase (Phase~3), the coalitions exchange their information and begin their secure transmission towards their corresponding destinations, in a TDMA manner, one coalition per slot. Every slot is owned by a user who transmits its data with the help of its coalition partners, if that user belongs to a coalition. Hence, in this phase, the user perform the actual beamforming, while transmitting the data of every user within its corresponding slot. Each run of the proposed algorithm consists of these three phases, and is summarized in Table~\ref{tab:alg1}. As time evolves and the users, eavesdroppers and destinations move (or new users or eavesdroppers enter/leave the network), the users can autonomously self-organize and adapt the network's topology through appropriate merge-and-split decisions during Phase~2. This adaptation to environmental changes is ensured by enabling the users to run the adaptive coalition formation phase periodically in the network.


The proposed algorithm in Table~\ref{tab:alg1} can be implemented in a distributed manner. As the user can detect the strength of other users' uplink signals (through techniques similar to those used in the ad hoc routing discovery) \cite{Zhu_book}, nearby coalitions can be discovered in Phase~1 for potential cooperation. In fact, during Phase~1, each coalition in the network can easily work out the area within which candidates for merge can be found, as previously explained in this section. Once the neighbors are discovered, the coalitions can perform merge operations based on the Pareto order in Phase~2. The complexity of the merge operation can grow exponentially with the number of candidates with whom a user $i$ is able to merge (the number of coalitions in the neighboring area which is in general significantly smaller than $N$). As more coalitions form, the area within which candidates are found is smaller, and, hence, the merge complexity reduces. In addition, whenever a coalition finds a candidate to merge with, it automatically goes through with the merge operation, hence, avoiding the need for finding all possible merge forms and reducing further the complexity. Further, each formed coalition can also internally decides to split if its members find a split form by Pareto order. By using a control channel, the distributed users can coordinate and then cooperate using our model. \vspace{-0.7cm}


\subsection{Partition Stability}\vspace{-0.3cm}
The result of the proposed algorithm in Table~\ref{tab:alg1} is a
network partition composed of disjoint independent coalitions. The stability of this network partition can be investigated using the concept of a defection function \cite{KA01}.
\begin{definition}
A \emph{defection} function $\mathbb{D}$ is a function which
associates with each partition $\mathcal{T}$ of $\mathcal{N}$ a group of
collections in $\mathcal{N}$. A partition $\mathcal{T} = \{T_1,\ldots,T_l\}$ of $\mathcal{N}$ is \emph{$\mathbb{D}$-stable} if no group of players is interested in leaving $\mathcal{T}$ when the players who leave can only form the collections allowed by $\mathbb{D}$.
\end{definition}

We are interested in two defection functions \cite{WSTUT,KA01}. First, the $\mathbb{D}_{hp}$ function which associates with each partition $\mathcal{T}$ of $\mathcal{N}$ the group of all partitions of $\mathcal{N}$ that can form through merge or split and the $\mathbb{D}_{c}$ function which associates with each partition $\mathcal{T}$ of $\mathcal{N}$ the group of all collections in $\mathcal{N}$. This function allows any group of players to leave the partition $\mathcal{T}$ of $\mathcal{N}$ through \emph{any} operation and create an arbitrary \emph{collection} in
$\mathcal{N}$.  Two forms of stability stem from these definitions: $\mathbb{D}_{hp}$ stability and a stronger $\mathbb{D}_{c}$
stability. A partition $\mathcal{T}$ is $\mathbb{D}_{hp}$-stable, if no
player in $\mathcal{T}$ is interested in leaving $\mathcal{T}$ through
merge-and-split to form other partitions in $\mathcal{N}$; while a partition
$\mathcal{T}$ is $\mathbb{D}_{c}$-stable, if no player in $\mathcal{T}$ is
interested in leaving $\mathcal{T}$ through \emph{any} operation (not
necessarily merge or split) to form other collections in $\mathcal{N}$.

Hence, a partition is $\mathbb{D}_{hp}$-stable if no coalition has an incentive to split or merge. For instance, a partition $\mathcal{T}=\{T_1,\ldots,T_l\}$ is $\mathbb{D}_{hp}$-stable, if the following two necessary and sufficient conditions are met \cite{WSTUT,KA01} ($\ntriangleright$ is the non-preference operator, opposite of $\rhd$): (i)- For each $i\in \{1,\ldots,m\}$ and for each partition $ \{R_1,\ldots,R_m\}$ of $T_i \in \mathcal{T}$ we have  $\{R_1,\ldots,R_m\}\ntriangleright T_i$, and (ii)- For each $S \subseteq \{1,\ldots,l\}$ we have
  $\bigcup_{i\in S}T_i\ntriangleright \{T_i|i\in S\}$. Using this definition of $\mathbb{D}_{hp}$ stability, we have
\begin{theorem}
\vspace{-0.1cm} Every partition resulting from our proposed coalition formation algorithm is $\mathbb{D}_{hp}$-stable.\vspace{-0.22cm}
\end{theorem}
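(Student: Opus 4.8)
The plan is to leverage the termination guarantee of the merge-and-split dynamics together with the characterization of $\mathbb{D}_{hp}$-stability stated just above the theorem. First I would invoke the result of \cite{KA01}, already cited in the excerpt, which ensures that any arbitrary sequence of merge and split operations terminates after finitely many steps. Hence the proposed algorithm halts at a well-defined final partition $\mathcal{T} = \{T_1,\ldots,T_l\}$ of $\mathcal{N}$, and it remains only to verify that this terminal partition is $\mathbb{D}_{hp}$-stable.

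The core observation is that a terminal partition is precisely a fixed point of the merge-and-split rules: by construction the algorithm keeps applying a merge or a split whenever one yields a Pareto-preferred collection, and it stops only when neither rule can be applied to produce a change. I would therefore establish the two necessary and sufficient conditions for $\mathbb{D}_{hp}$-stability at $\mathcal{T}$ by contradiction. For condition (ii), suppose some subset $S \subseteq \{1,\ldots,l\}$ satisfied $\bigcup_{i\in S}T_i \rhd \{T_i \mid i\in S\}$. Then the merge rule would be applicable to the coalitions $\{T_i \mid i\in S\}$, so the algorithm could not have terminated at $\mathcal{T}$, a contradiction; hence $\bigcup_{i\in S}T_i \ntriangleright \{T_i \mid i\in S\}$ for every such $S$. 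Symmetrically, for condition (i), suppose some $T_i \in \mathcal{T}$ admitted a partition $\{R_1,\ldots,R_m\}$ with $\{R_1,\ldots,R_m\} \rhd T_i$. Then the split rule would apply to $T_i$, again contradicting termination, so $\{R_1,\ldots,R_m\} \ntriangleright T_i$ for every partition of every $T_i$. Since both conditions hold, $\mathcal{T}$ is $\mathbb{D}_{hp}$-stable.

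The logical steps are essentially a translation between the merge/split rules (phrased via the Pareto order $\rhd$) and the stability conditions (phrased via the non-preference operator $\ntriangleright$), so the main obstacle is not the deduction itself but pinning down precisely what ``termination'' delivers. I would need to make explicit that the algorithm's halting state is one in which \emph{no} merge and \emph{no} split can strictly improve the partition in the Pareto sense, i.e. that the terminal partition is genuinely a global fixed point rather than merely a point at which the particular scanned order of operations stalled. This rests on the algorithmic detail, described in Section~\ref{sec:mergeandsplit}, that the merge phase sequentially enumerates all admissible candidate groupings and the split phase considers the admissible partitions of each coalition, so that the absence of any applicable rule at halting is a property of $\mathcal{T}$ and not an artifact of the visiting order. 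With this in hand, the equivalence between ``terminal'' and ``$\mathbb{D}_{hp}$-stable'' is immediate, completing the proof.
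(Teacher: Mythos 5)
Your proposal is correct and follows essentially the same route as the paper's proof (which the paper delegates to \cite[Theorem~1]{WIOPT}): the final partition is a fixed point of the merge-and-split iteration, and by contradiction any violation of condition (i) would make a split applicable while any violation of condition (ii) would make a merge applicable, contradicting termination, which is guaranteed by \cite{KA01}. Your closing caveat --- that termination must mean no admissible merge or split exists for the partition itself, not merely for the visiting order used --- is exactly the point that makes the fixed-point reading legitimate, and it is handled in the paper the same way you handle it.
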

\begin{proof}
The proof is given in \cite[Theorem~1]{WIOPT}.\vspace{-0.3cm}
\end{proof}

Furthermore, a $\mathbb{D}_{c}$-stable partition $\mathcal{T}$ is characterized by being a strongly stable partition, which  satisfies the following properties: (i)- A $\mathbb{D}_{c}$-stable partition is $\mathbb{D}_{hp}$-stable, (ii)- A $\mathbb{D}_{c}$-stable partition is a \emph{unique} outcome of any iteration of merge-and-split and, (iii)- A $\mathbb{D}_{c}$-stable partition $\mathcal{T}$ is a
unique $\rhd$-maximal partition, that is for all partitions $ \mathcal{T}'
\neq \mathcal{T}$ of $\mathcal{N}$, $\mathcal{T} \rhd \mathcal{T}'$. In the case where $\rhd$ represents the Pareto order, this implies that the $\mathbb{D}_{c}$-stable
partition $\mathcal{T}$ is the partition that presents a \emph{Pareto optimal} utility distribution for all the players.

Clearly, it is desirable that the network self-organizes unto a $\mathbb{D}_{c}$-stable partition. However, the existence of a $\mathbb{D}_{c}$-stable partition is not always guaranteed \cite{KA01}. The  $\mathbb{D}_{c}$-stable partition $\mathcal{T} = \{T_{1},\ldots,T_{l} \}$ of the whole space $\mathcal{N}$ exists if a partition of $\mathcal{N}$ that verifies the following two necessary and sufficient conditions exists\cite{KA01}:
\begin{enumerate}
\item For each $i\in \{1,\ldots,l\}$ and each pair of disjoint \emph{coalitions}  $S_1$ and $S_2$ such that $\{S_1 \cup S_2\} \subseteq T_i$ we have $\{S_1 \cup S_2\} \rhd \{S_1,S_2\}$.
\item For the partition $\mathcal{T}=\{T_1,\ldots,T_l\}$ a coalition $G \subset \mathcal{N}$ formed of players belonging to different $T_i \in \mathcal{T}$ is $\mathcal{T}$-incompatible if for no
$i \in \{1,\ldots,l\}$ we have $G\subset T_i$.
\end{enumerate}

In summary, $\mathbb{D}_{c}$-stability requires that for  all $\mathcal{T}$-incompatible coalitions $\{G\}[\mathcal{T}] \rhd \{G\}$ where $\{G\}[\mathcal{T}] = \{G\cap T_i \ \forall \ i\in\{1,\ldots,l\}\}$ is the projection of coalition $G$ on $\mathcal{T}$. If no partition of $\mathcal{N}$ can satisfy these conditions, then no $\mathbb{D}_{c}$-stable partition of $\mathcal{N}$ exists. Nevertheless, we have

\begin{lemma}
For the proposed $(\mathcal{N},v)$ PHY security coalitional game, the proposed algorithm of Table~\ref{tab:alg1} converges to the optimal $\mathbb{D}_{c}$-stable partition, if such a partition exists. Otherwise, the final network partition is $\mathbb{D}_{hp}$-stable.\vspace{-0.22cm}
\end{lemma}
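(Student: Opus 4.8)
The plan is to decompose the statement into its two cases and lean on the structural results about merge-and-split dynamics established in \cite{KA01}, together with the already-proven $\mathbb{D}_{hp}$-stability of Theorem~1. First I would establish convergence in the generic sense: since the algorithm of Table~\ref{tab:alg1} executes nothing more than a sequence of merge and split operations governed by the Pareto order $\rhd$, and since \cite{KA01} guarantees that any arbitrary iteration of merge and split rules terminates after finitely many steps, the adaptive coalition formation phase (Phase~2) halts at a well-defined terminal partition $\mathcal{T}^{\ast}$. This settles that the algorithm produces a final output in finite time.

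Next, for the ``otherwise'' clause, I would invoke Theorem~1 directly: every partition that the algorithm produces is $\mathbb{D}_{hp}$-stable. Hence whenever no $\mathbb{D}_c$-stable partition exists, the terminal partition $\mathcal{T}^{\ast}$ is automatically $\mathbb{D}_{hp}$-stable and there is nothing further to prove for this case.

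The substantive part is the first clause, namely that the algorithm reaches the $\mathbb{D}_c$-stable partition whenever one exists. Here I would appeal to the characterization of $\mathbb{D}_c$-stability recalled just before the lemma: a $\mathbb{D}_c$-stable partition $\mathcal{T}$, when it exists, is (ii) the unique outcome of any iteration of merge-and-split and (iii) the unique $\rhd$-maximal partition, i.e. $\mathcal{T} \rhd \mathcal{T}'$ for every distinct partition $\mathcal{T}'$ of $\mathcal{N}$. The argument then runs as follows: by termination the algorithm ends at some $\mathcal{T}^{\ast}$; by property (ii), every merge-and-split iteration---including the particular sequence realized by our algorithm---must terminate at the one $\mathbb{D}_c$-stable partition $\mathcal{T}$; therefore $\mathcal{T}^{\ast} = \mathcal{T}$. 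Equivalently, one can use (iii): since $\mathcal{T}$ is strictly $\rhd$-preferred to every other partition, no merge or split move can escape it, while from any starting partition a terminating sequence of Pareto-improving merge/split steps cannot halt before reaching this unique $\rhd$-maximal element.

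The main obstacle I anticipate is verifying that the cited uniqueness property (ii) genuinely applies to our algorithm rather than to an idealized exhaustive merge-and-split process. In particular, Table~\ref{tab:alg1} uses a ``first-improvement'' merge (a coalition merges as soon as it finds one Pareto-improving partner, without enumerating all merge forms), and it restricts attention to coalitions of size exceeding $K$ (Remark~1) and to neighbors discovered in Phase~1. I would need to confirm that these implementation choices do not create spurious terminal partitions: the size and neighborhood restrictions only forbid moves that would drive some payoff to $-\infty$ and thus could never be Pareto-improving anyway, by (\ref{eq:util}), so they prune exactly the non-viable branches while leaving the reachable $\rhd$-maximal partition intact. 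Making this reduction rigorous---that the pruned, first-improvement dynamics still inherit the convergence-to-unique-outcome guarantee of \cite{KA01}---is the crux, after which both cases follow as above.
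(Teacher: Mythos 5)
Your proposal takes essentially the same route as the paper's own (very terse) proof, which likewise combines Theorem~1 (giving the $\mathbb{D}_{hp}$-stable fallback when no $\mathbb{D}_{c}$-stable partition exists) with the fact from \cite{KA01} that the $\mathbb{D}_{c}$-stable partition, when it exists, is the unique outcome of any merge-and-split iteration, so the algorithm's terminal partition must coincide with it. The only difference is that the paper simply asserts this uniqueness result applies to its algorithm (``which is the case with any partition resulting from our algorithm''), whereas you explicitly flag and discharge the worry that first-improvement merging and the neighborhood/size pruning might create spurious terminal partitions---a refinement of, not a departure from, the paper's argument.
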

\begin{proof}
The proof is a consequence of Theorem~1 and the fact that the  $\mathbb{D}_{c}$-stable partition is a unique outcome of any merge-and-split iteration \cite{KA01} which is the case with any partition resulting from our algorithm.\vspace{-0.3cm}
\end{proof}

Moreover, for the proposed game, the existence of the $\mathbb{D}_{c}$-stable partition cannot be always guaranteed. For instance, for verifying the first condition for existence of the  $\mathbb{D}_{c}$-stable partition, the users that are members of each coalitions must verify the Pareto order through their utility given by (\ref{eq:util}). Similarly, for verifying the second condition of $\mathbb{D}_{c}$ stability,  users belonging to all $\mathcal{T}$-incompatible coalitions in the network must verify the Pareto order. Consequently, the existence of such a  $\mathbb{D}_{c}$-stable partition is strongly dependent on the location of the users and eavesdroppers through the individual utilities (secrecy capacities). Hence, the existence of the $\mathbb{D}_{c}$-stable partition  is closely tied to the location of the users and the eavesdroppers, which, in a practical ad hoc wireless network are generally random. However, the proposed algorithm will always guarantee convergence to this optimal  $\mathbb{D}_{c}$-stable partition when it exists as stated in Lemma~1. Whenever a $\mathbb{D}_{c}$-stable partition does not exist, the coalition structure resulting from the proposed algorithm will be $\mathbb{D}_{hp}$-stable (no coalition or individual user is able to merge or split any further).\vspace{-5.5mm}

\section{Simulation Results and analysis}\label{sec:sim}\vspace{-0.3cm}
For simulations, a square network of $2.5$~km $\times$ $2.5$~km is set up with the users, eavesdroppers, and destinations randomly deployed within this area\footnote{This general network setting simulates a broad range of network types ranging from ad hoc networks, to sensor networks, WLAN networks as well as broadband or cellular networks.}. 
 In this network, the users are always assigned to the closest destination, although other user-destination assignments can be used without any loss of generality. For all simulations, the number of destinations is taken as $M=2$. Further, the power constraint per slot is set to $\tilde{P}=10$~mW, the noise level is $-90$~dBm, and the SNR for information exchange is $\nu_0=10$~dB which implies a neighbor discovery circle radius of $1$~km per user. For the channel model, the propagation loss is set to $\mu=3$.  All statistical results are averaged over the random positions of the users, eavesdroppers and destinations.


\begin{figure}[!t]
\begin{minipage}[b]{0.45\linewidth}
\centering
\includegraphics[width=\textwidth]{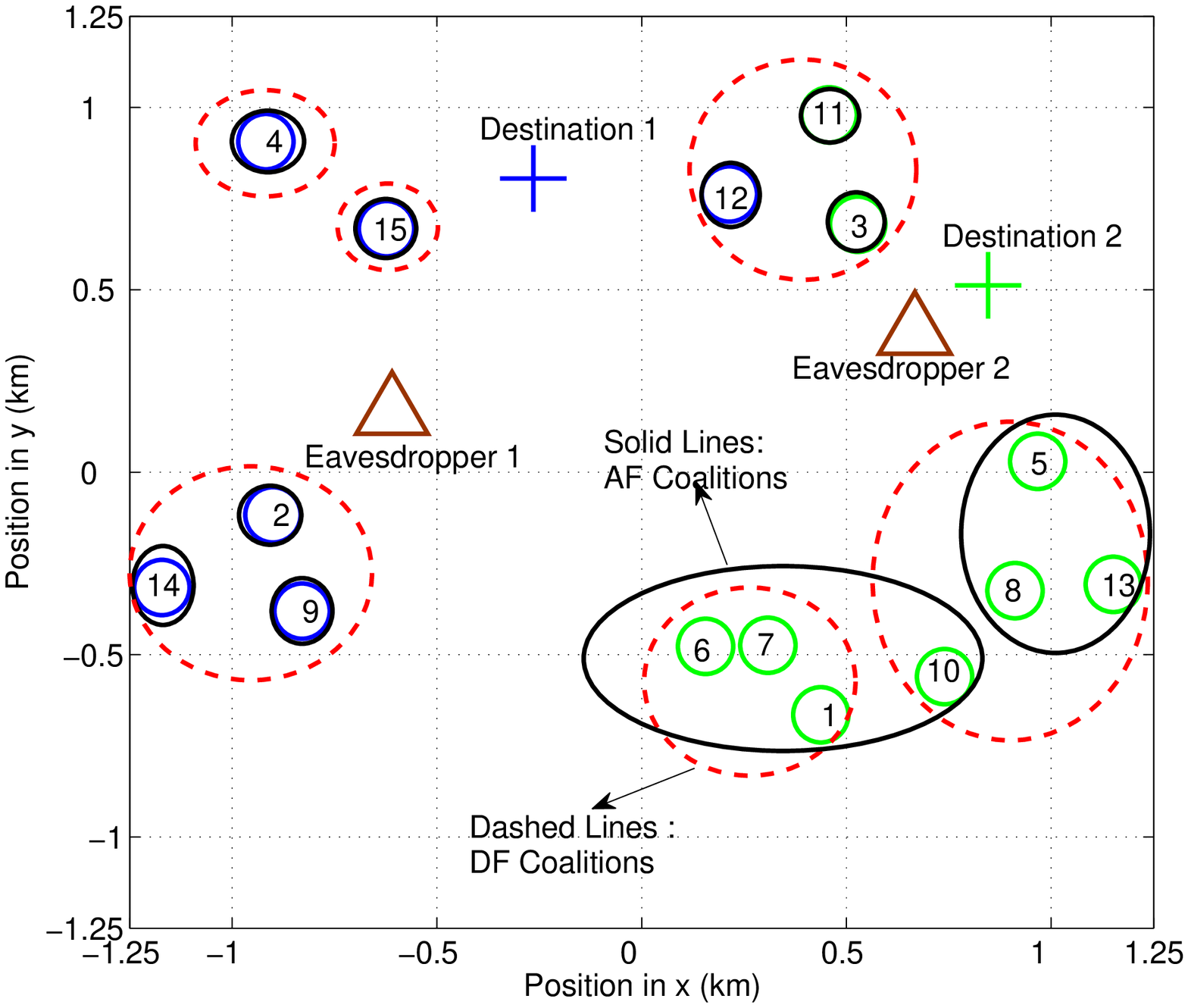}
\vspace{-0.6cm}
\caption{A snapshot of a coalitional structure resulting from our proposed coalition formation algorithm for a network with $N=15$~users, $M=2$~destinations and $K=2$~eavedroppers for DF (dashed lines) and AF (solid lines).}\label{fig:snapshot}\vspace{-0.6cm}
\end{minipage}
\hspace{0.5cm}
\begin{minipage}[b]{0.45\linewidth}
\centering
\includegraphics[width=\textwidth]{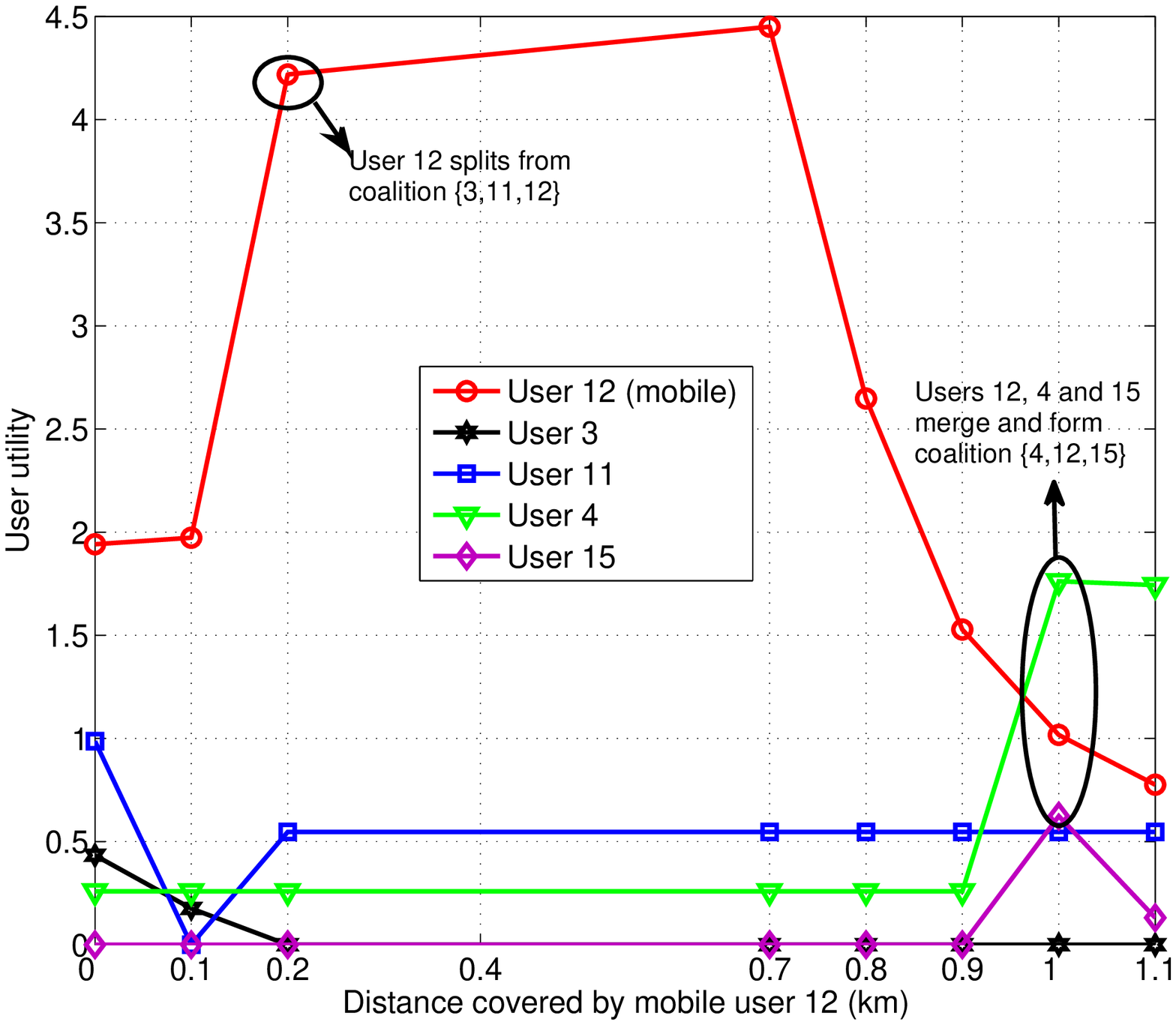}
\caption{Self-adaptation of the network's topology to mobility as User 12 in Fig.~\ref{fig:snapshot} moves horizontally on the negative x-axis (for DF).\vspace{0.52cm}}
\label{fig:mob}\vspace{-0.6cm}
\end{minipage}
\end{figure}
In Fig.~\ref{fig:snapshot}, we show a snapshot of the network structure resulting from the proposed coalition formation algorithm for a randomly deployed network with $N=15$~users and $K=2$ eavesdroppers for both DF (dashed lines) and AF (solid lines) protocols. For DF, the users self-organized into $6$ coalitions with the size of each coalition strictly larger than $K$ or equal to $1$. For example, Users~4 and 15, having no suitable partners for forming a coalition of size larger than $2$, do not cooperate. The coalition formation process is a result of Pareto order agreements for merge (or split) between the users. For example, in DF, coalition $\{5,8,10,13\}$ formed since all the users agree on its formation due to the fact that $V(\{5,8,10,13\})=\{\boldsymbol{\phi}(\{5,8,10,13\})=[0.356\ 0.8952\ 1.7235\ 0.6213]\}$ which is a clear improvement on the non-cooperative utility which was $0$ for all four users (due to proximity to eavesdropper~2). For AF, Fig.~\ref{fig:snapshot} shows that only users $\{5,8,13\}$ and users $\{1,6,7,10\}$ cooperate while all others remain non-cooperative. The main reason is that, in AF, the users need to amplify a noisy version of the signal using the beamforming weights. As a consequence, the noise can be highly amplified, and, for AF, cooperation is only beneficial in very favorable conditions. For example, coalitions $\{5,8,13\}$ and $\{1,6,7,10\}$ have formed for AF due to being far from the eavesdroppers (relatively to the other users), hence, having a small cost for information exchange. In contrast, for coalitions such as $\{3,11,12\}$, the benefit from cooperation using AF is small compared to the cost, and, thus, these coalitions do not form.

In Fig.~\ref{fig:mob} we show how the algorithm handles
mobility through appropriate coalition formation decisions. For this purpose, the network setup of Fig.~\ref{fig:snapshot}  is considered for the DF case while User $12$ is moving horizontally for $1.1$~km in the direction of the \emph{negative} x-axis. First of all, User $12$ starts getting closer to its receiver (destination~2), and, hence, it improves its utility. In the meantime, the utilities of User $12$'s partners (Users $3$ and $11$) drop due to the increasing cost. As long as the distance covered by User $12$ is less than $0.2$~km, the coalition of Users $3$, $11$ and $12$ can still bring mutual benefits to all three users. After that, splitting occurs by a mutual agreement and all three users transmit independently. When User $12$ moves about $0.8$~km, it begins to distance itself from its receiver and its utility begins to decrease. When the distance covered by User $12$ reaches about $1$~km, it will be beneficial to Users $12$, $4$, and $15$ to form a 3-user coalition through the merge rule since they improve their utilities from $\phi_4(\{4\})=0.2577$, $\phi_{12}(\{12\})=0.7638$, and $\phi_{15}(\{15\})=0$ in a non-cooperative manner to  $V(\{4,12,15\})=\{\boldsymbol{\phi}(\{4,12,15\})=[1.7618\ 1.0169\ 0.6227]\}$.

In Fig.~\ref{fig:perf} we show the performance, in terms of average utility (secrecy rate) per user, as a function of the network size $N$ for both the DF and AF cases for a network with  $K=2$~eavesdroppers. First, we note that the performance of coalition formation with DF is increasing with the size of the network, while the non-cooperative and the AF case present an almost constant performance. For instance, for the DF case, Fig.~\ref{fig:perf} shows that, by forming coalitions, the average individual utility (secrecy rate) per user is increased at all network sizes with the performance advantage of DF increasing with the network size and reaching up to $25.3\%$ and $24.4\%$ improvement over the non-cooperative and the AF cases, respectively, at $N= 45$. This is interpreted by the fact that, as the number of users $N$ increases, the probability of finding candidate partners to form coalitions with, using DF, increases for every user. Moreover, Fig.~\ref{fig:perf} shows that the performance of AF cooperation is comparable to the non-cooperative case. Hence, although AF relaying can improve the secrecy rate of large clusters of nearby cooperating users when no cost is accounted for such as in \cite{WI07}, in a practical wireless network and in the presence of a cooperation cost, the possibility of cooperation using AF for secrecy rate improvement is rare as demonstrated in Fig.~\ref{fig:perf}. This is mainly due to the strong dependence of the secrecy rate for AF cooperation on the channel between the users as per (\ref{eq:seccoopAF}), as well as the fact that, for AF, unless highly favorable conditions exist (e.g. for coalitions such as $\{1,6,7,10\}$ in Fig.~\ref{fig:snapshot}) , the amplification of the noise resulting from beamforming using AF relaying hinders the gains from cooperation relative to the secrecy cost during the information exchange phase.

\begin{figure}[!t]
\begin{minipage}[b]{0.45\linewidth}
\centering
\includegraphics[width=\textwidth]{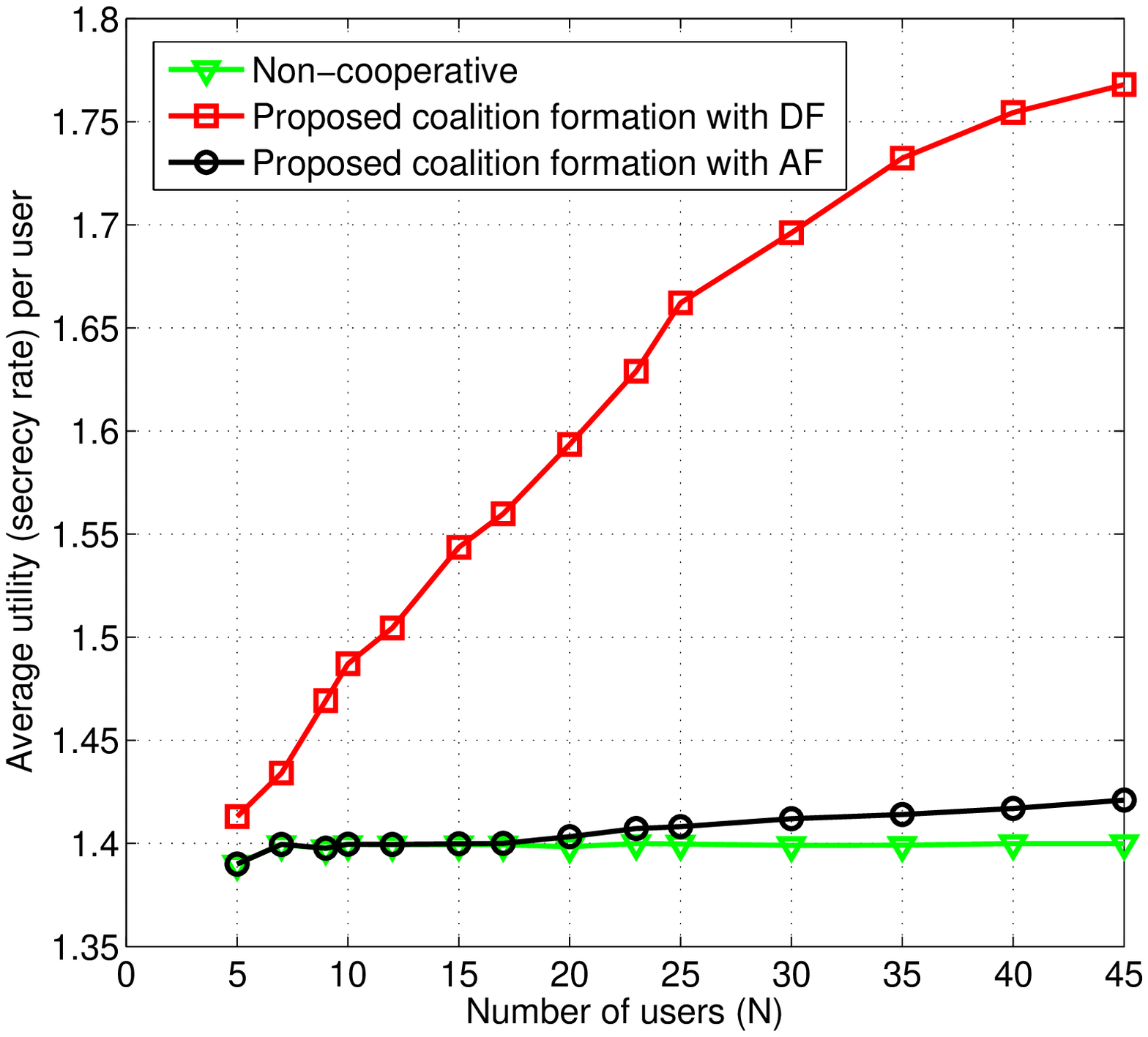}
\vspace{-0.6cm}
\caption{Performance in terms of the average individual user utility (secrecy rate) as a function of the network size $N$ for $M=2$~destinations and $K=2$~eavesdroppers.}\label{fig:perf}\vspace{-0.6cm}
\end{minipage}
\hspace{0.5cm}
\begin{minipage}[b]{0.45\linewidth}
\centering
\includegraphics[width=\textwidth]{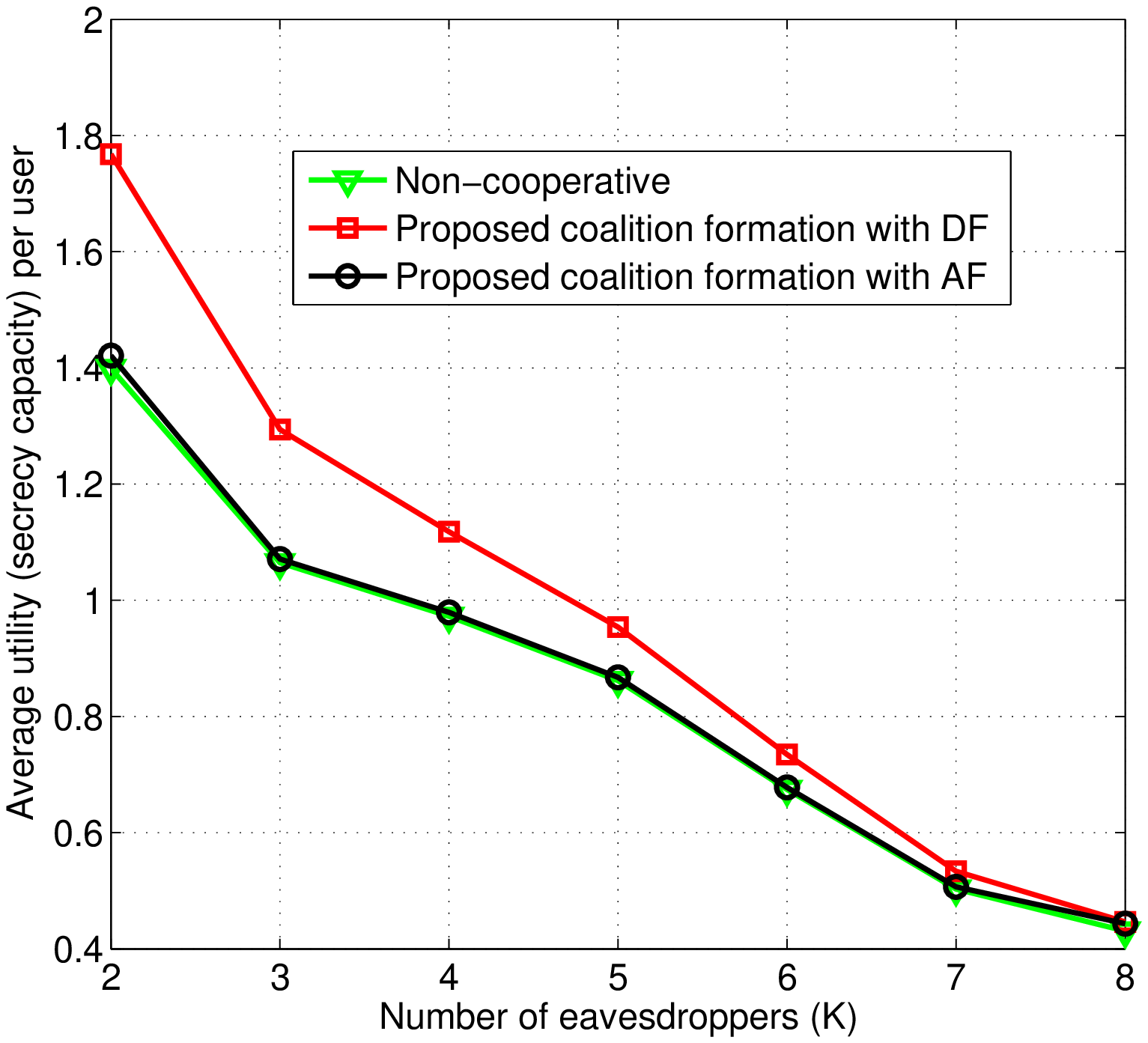}
\vspace{-0.6cm}
\caption{Performance in terms of the average individual user utility (secrecy rate) as a function of the number of eavesdroppers $K$ for $N=45$~users and $M=2$~destinations.\vspace{0.2cm}}
\label{fig:perfe}\vspace{-0.6cm}
\end{minipage}
\end{figure}

In Fig.~\ref{fig:perfe}, we show the performance, in terms of average utility (secrecy rate) per user, as the number of eavesdroppers $K$ increases for both the DF and AF cases for a network with  $N=45$~users. Fig.~\ref{fig:perf} shows that, for DF, AF and the non-cooperative case, the average secrecy rate per user decreases as more eavesdroppers are present in the area.  Moreover, for DF, the proposed coalition formation algorithm presents a performance advantage over both the non-cooperative case and the AF case at all $K$. Nonetheless, as shown by Fig.\ref{fig:perfe}, as the number of eavesdroppers increases, it becomes quite difficult for the users to improve their secrecy rate through coalition formation; consequently, at $K=8$, all three schemes exhibit a similar performance. Finally, similar to the results of Fig.~\ref{fig:perf}, coalition formation using the AF cooperation protocol has a comparable performance with that of the non-cooperative case at all $K$ as seen in Fig.~\ref{fig:perfe}.

\begin{figure}[!t]
\begin{minipage}[b]{0.45\linewidth}
\centering
\includegraphics[width=\textwidth]{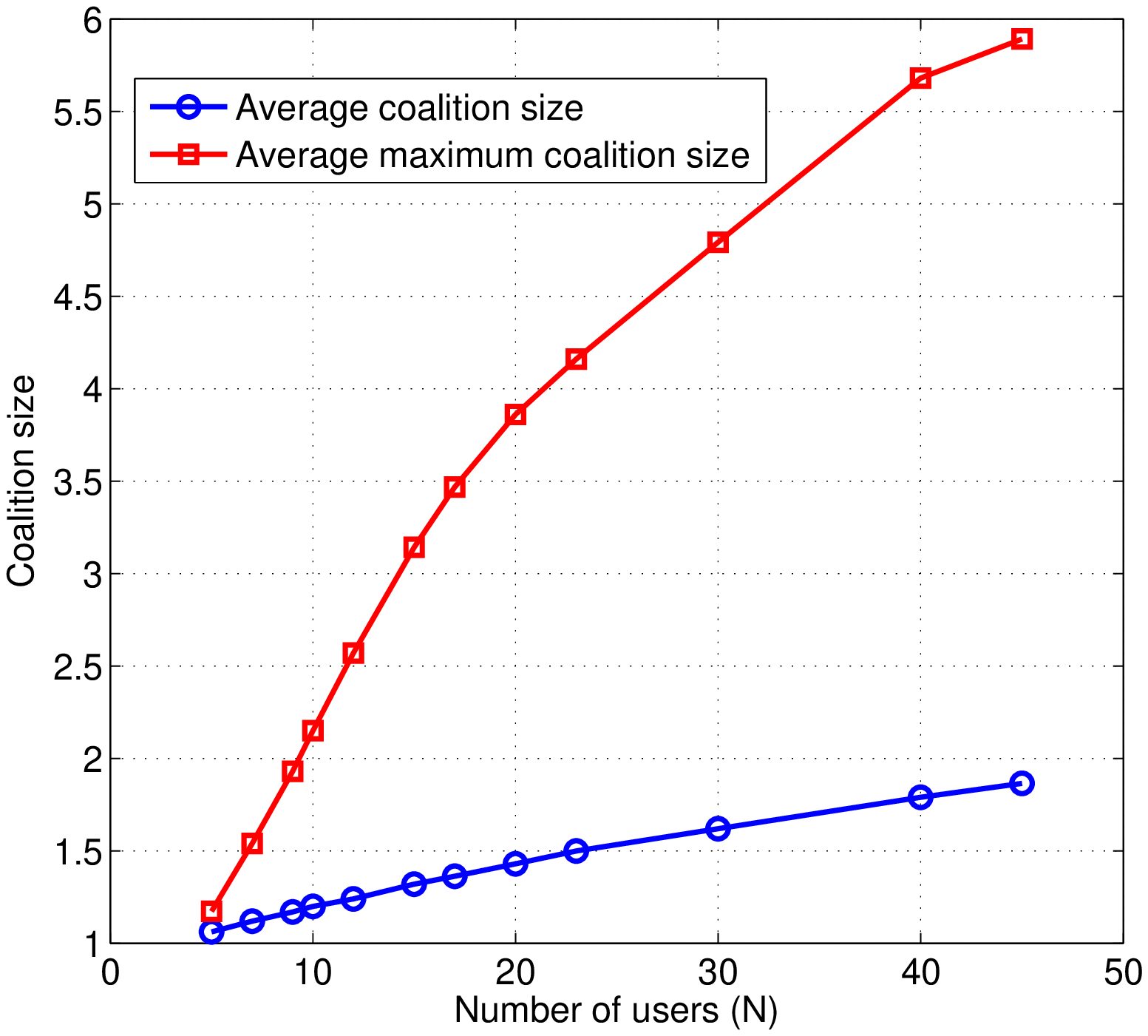}
\vspace{-0.6cm}
\caption{Average and average maximum coalition size as the network size $N$ varies for $M=2$~destinations and $K=2$~eavesdroppers and DF cooperation.}\label{fig:size}\vspace{-0.6cm}
\end{minipage}
\hspace{0.5cm}
\begin{minipage}[b]{0.45\linewidth}
\centering
\includegraphics[width=\textwidth]{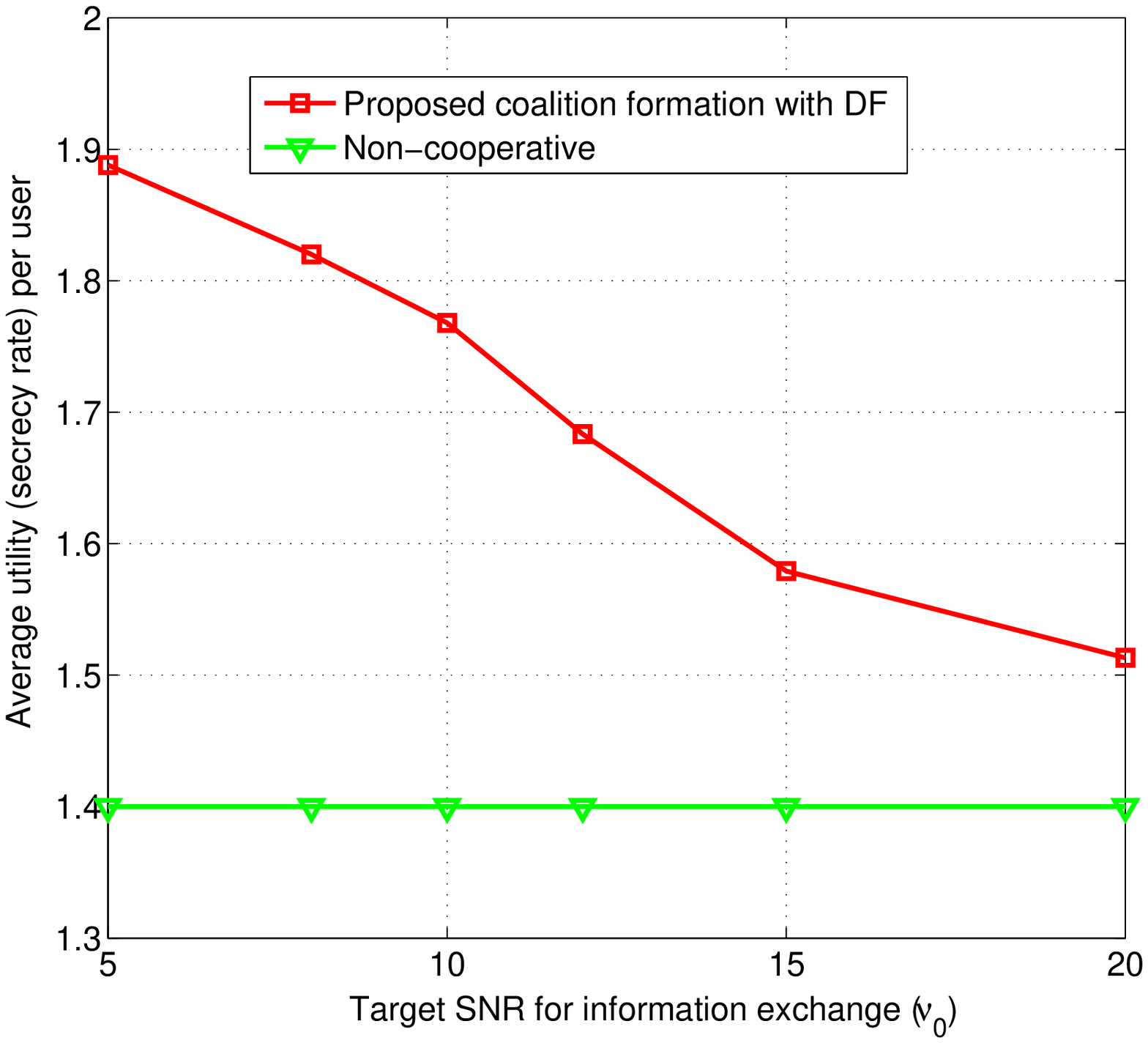}
\vspace{-0.6cm}
\caption{Average individual user utility as a function of the target SNR $\nu_0$ for information exchange for a network with $N=45$~users, $K=2$~eavesdroppers and $M=2$~destinations for DF.}
\label{fig:snr}\vspace{-0.6cm}
\end{minipage}
\end{figure}

In Fig.~\ref{fig:size}, for DF cooperation, we show the average and average maximum coalition size resulting from the proposed algorithm as the number of users, $N$, increases, for a network with $K=2$~eavesdroppers. Fig.~\ref{fig:size} shows that  both the average and average maximum coalition size increase with the number of users. This is mainly due to the fact that as $N$ increases, the number of candidate cooperating partners increases. Further, through Fig.~\ref{fig:size} we note that the formed coalitions have a small average size and a relatively large maximum size reaching up to around $2$ and $6$, respectively, at $N=45$. Since the average coalition size is below the minimum of $3$ (as per Remark~1 due to having $2$ eavesdroppers) and the average maximum coalition size is relatively large, the network structure is thus composed of a number of large coalitions with a few non-cooperative users.

In Fig.~\ref{fig:snr}, the performance, in terms of average utility (secrecy rate) per user, of the network for different cooperation costs, i.e., target average SNRs $\nu_0$ is assessed.  Fig.~\ref{fig:snr}  shows that cooperation through coalition formation with DF maintains gains, in terms of average secrecy rate per user, at almost all costs (all SNR values). However, as the cost increases and the required target SNR becomes more stringent these gains decrease converging further towards the non-cooperative gains at high cost since cooperation becomes
difficult due to the cost. As seen in Fig.~\ref{fig:snr}, the secrecy rate gains resulting from the proposed coalition formation algorithm range from $8.1\%$ at $\nu_0=20$~dB to around $34.9\%$ at $\nu_0=5$~dB improvement relative to the non-cooperative case.

The proposed algorithm's performance is further investigated in networks with $N = 20$ and $N=45$~\emph{mobile} users (random walk mobility) for a period of
$5$ minutes in the presence of $K=2$~stationary eavesdroppers. During this period, the proposed algorithm is run periodically every $30$ seconds. The results in terms of the frequency of
merge and split operations per minute are shown in Fig.~\ref{fig:freq} for various speeds. As the speed increases, the frequency of both merge and split operations per minute increases due to the changes in the network structure incurred by the increased mobility. These frequencies reach up to around $19$ merge operations per minute and $9$ split operations per minute for $N=45$ at a speed of $72$~km/h. Finally, Fig.~\ref{fig:freq} demonstrates that the frequency of merge and split operations increases with the network size $N$ as the users become more apt to finding new cooperation partners when moving which results in an increased coalition formation activity.

\begin{figure}[!t]
\begin{minipage}[b]{0.45\linewidth}
\centering
\includegraphics[width=\textwidth]{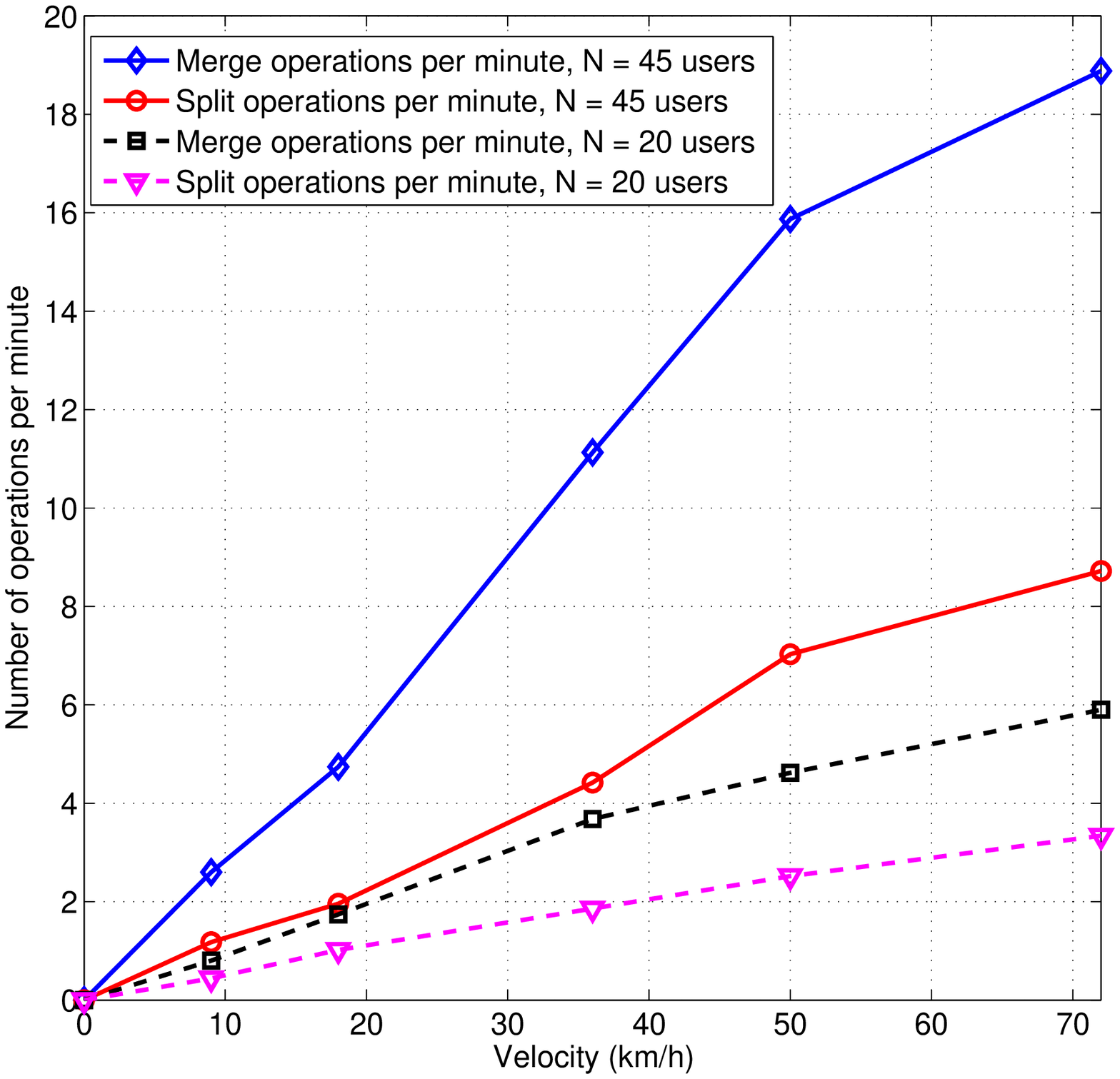}
\vspace{-0.6cm}
\caption{Frequency of merge and split operations per minute vs. speed of the users for different network sizes and $K=2$~eavesdroppers with DF cooperation.}\label{fig:freq}\vspace{-0.6cm}
\end{minipage}
\hspace{0.5cm}
\begin{minipage}[b]{0.45\linewidth}
\centering
\includegraphics[width=\textwidth]{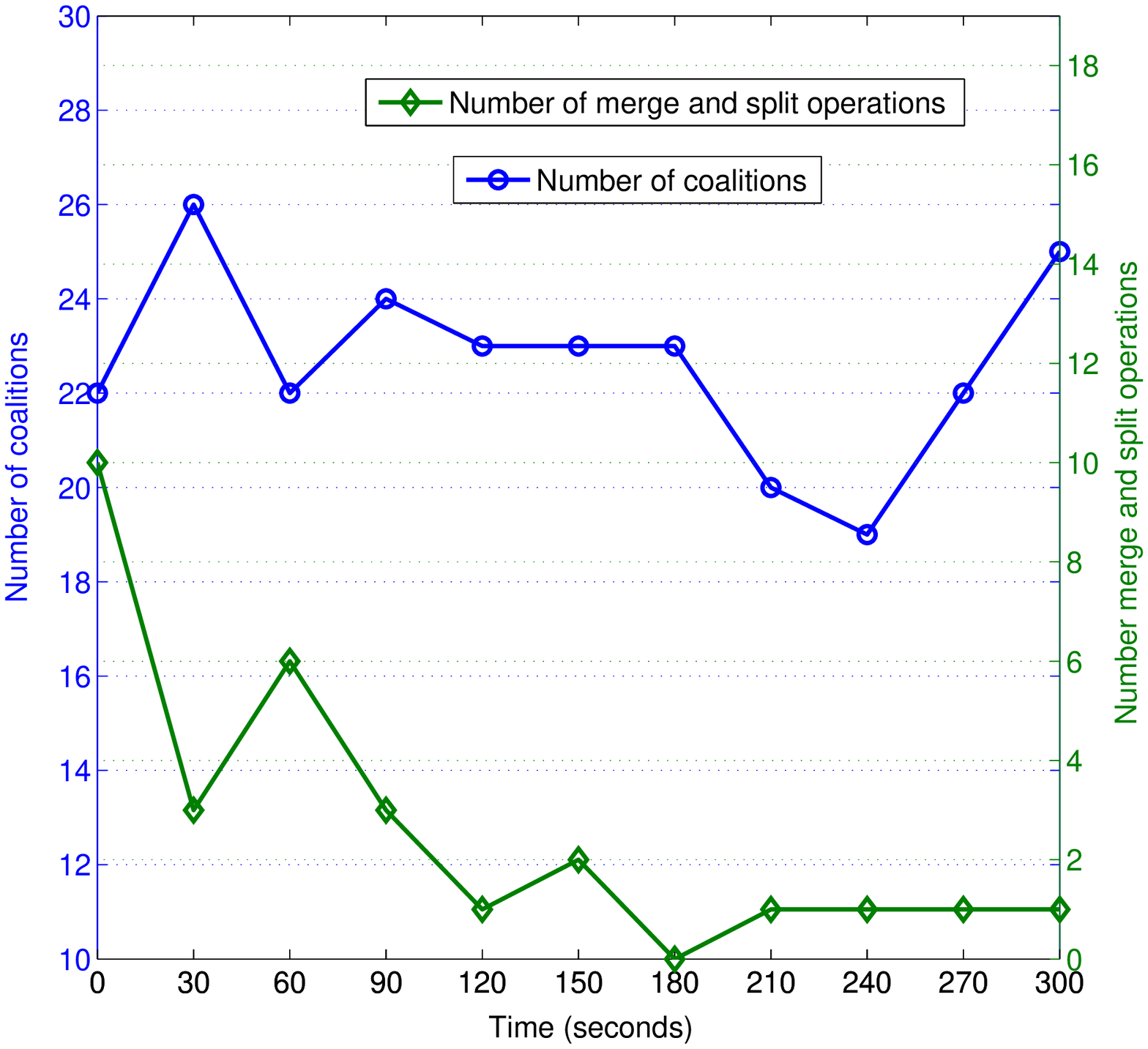}
\vspace{-0.6cm}
\caption{Evolution over time for a network with $N=45$~users, $M=2$~destinations, and $K=2$~eavesdroppers with DF cooperation when the eavesdroppers are moving with a speed of $50$~km/h.}
\label{fig:mobconv}\vspace{-0.6cm}
\end{minipage}
\end{figure}

Fig.~\ref{fig:mobconv} shows, for DF, how the structure of the wireless network with $N=45$~users and $K=2$~mobile eavesdroppers evolves and self-adapts over time (a period of $5$ minutes), while both eavesdroppers are mobile with a constant velocity of $50$~km/h. The proposed coalition formation algorithm is repeated periodically by the users every $30$ seconds, in order to provide self-adaptation to mobility. First, the users self-organize into $22$~coalitions after the occurrence of $10$ merge and split operations at time $t=0$. As time evolves, through adequate merge and split operations the network structure is adapted to the mobility of eavesdroppers. For example, at time $t=1$~minute, through a total of $6$ operations constituted of $5$ merge and $1$ split, the network structure changes from a partition of $26$~coalitions back to a partition of $22$ coalitions. Further, at $t=3$~minutes, no merge or split operations occur, and, thus, the network structure remain unchanged. In summary, Fig.~\ref{fig:mobconv} illustrates how the users can take adequate merge or split decisions to adapt the network structure to the mobility of the eavesdroppers.\vspace{-0.75cm}

\section{Conclusions}\vspace{-0.3cm}
\label{sec:conc}
In this paper, we have studied the user behavior, topology, and dynamics of a network of users that interact in order to improve their secrecy rate through both decode-and-forward and amplify-and-forward cooperation. We formulated the problem as a non-transferable coalitional game, and proposed a distributed and adaptive coalition formation algorithm. Through the proposed algorithm, the mobile users are able to take a distributed decision to form or break cooperative coalitions through well suited rules from cooperative games while maximizing their secrecy rate taking into account various costs for information exchange. We have characterized the network structure resulting from the proposed algorithm, studied its stability, and analyzed the self-adaptation of the topology to environmental changes such as mobility. Simulation results have shown that, for decode-and-forward, the proposed algorithm allowed the users to self-organize while improving the average secrecy rate per user up to $25.3\%$ and $24.4\%$ (for a network with $45$~users) relative to the non-cooperative and amplify-and-forward cases, respectively.\vspace{-0.7cm}
\def\baselinestretch{0.88}
\bibliographystyle{IEEEtran}
\bibliography{references}

\begin{thebibliography}{10}
\providecommand{\url}[1]{#1}
\csname url@samestyle\endcsname
\providecommand{\newblock}{\relax}
\providecommand{\bibinfo}[2]{#2}
\providecommand{\BIBentrySTDinterwordspacing}{\spaceskip=0pt\relax}
\providecommand{\BIBentryALTinterwordstretchfactor}{4}
\providecommand{\BIBentryALTinterwordspacing}{\spaceskip=\fontdimen2\font plus
\BIBentryALTinterwordstretchfactor\fontdimen3\font minus
  \fontdimen4\font\relax}
\providecommand{\BIBforeignlanguage}[2]{{%
\expandafter\ifx\csname l@#1\endcsname\relax
\typeout{** WARNING: IEEEtran.bst: No hyphenation pattern has been}%
\typeout{** loaded for the language `#1'. Using the pattern for}%
\typeout{** the default language instead.}%
\else
\language=\csname l@#1\endcsname
\fi
#2}}
\providecommand{\BIBdecl}{\relax}
\BIBdecl

\bibitem{WY00}
A.~D. Wyner, ``The wire-tap channel,'' \emph{Bell System Technical Journal},
  vol.~54, no.~8, pp. 1355--1387, 1975.

\bibitem{WY01}
S.~K. Leung-Yan-Cheong and M.~E. Hellman, ``The {Gaussian} wiretap channel,''
  \emph{{IEEE} Trans. Inform. Theory}, vol.~24, no.~4, pp. 451--456, Jul. 1978.

\bibitem{WY02}
I.~Csiszar and J.~Korner, ``Broadcast channels with confidential messages,''
  \emph{{IEEE} Trans. Inform. Theory}, vol.~24, no.~3, pp. 339--348, May 1978.

\bibitem{WI00}
Y.~Liang, H.~V. Poor, and S.~Shamai, ``Secure communication over fading
  channels,'' \emph{{IEEE} Trans. Inform. Theory}, vol.~54, no.~6, pp.
  2470--2492, Jun. 2008.

\bibitem{WI01}
P.~K. Gopala, L.~Lai, and H.~E. Gamal, ``On the secrecy capacity of fading
  channels,'' \emph{{IEEE} Trans. Inform. Theory}, vol.~54, no.~10, pp.
  4687--4698, Sep. 2008.

\bibitem{WI02}
Y.~Liang and H.~V. Poor, ``Multiple-access channels with confidential
  messages,'' \emph{{IEEE} Trans. Inform. Theory}, vol.~54, no.~3, pp.
  976--1002, Mar. 2008.

\bibitem{WI04}
P.~Prada and R.~Blahut, ``Secrecy capacity of {SIMO} and slow fading
  channels,'' in \emph{Proc. Int. Symp. Inf. Theory}, Adelaide, Australia, Sep.
  2005, pp. 2152--2155.

\bibitem{WI06}
Z.~Li, W.~Trappe, and R.~Yates, ``Secret communication via multi-antenna
  transmission,'' in \emph{Proc. of 41st Conference on Information Sciences and
  Systems}, Baltimore, MD, USA, Mar. 2007.

\bibitem{WI05}
L.~Dong, Z.~Han, A.~Petropulu, and H.~V. Poor, ``Secure wireless communications
  via cooperation,'' in \emph{Proc. Allerton Conference on Communication,
  Control, and Computing}, Monticello, IL, USA, Sep. 2008.

\bibitem{WI07}
------, ``{Amplify-and-Forward} based cooperation for secure wireless
  communication,'' in \emph{Proc. International Conf. on Acoustics, Speech, and
  Signal Processing}, Taipei, Taiwan, Apr. 2009.

\bibitem{WSTUT}
W.~Saad, Z.~Han, M.~Debbah, A.~Hj{\o}rungnes, and T.~Ba\c{s}ar, ``Coalition
  game theory for communication networks: {A} tutorial,'' \emph{IEEE Signal
  Processing Mag., Special issue on Game Theory in Signal Processing and
  Communications}, vol.~26, no.~5, pp. 77--97, Sep. 2009.

\bibitem{ACH}
Y.~Liang, G.~Kramer, H.~V. Poor, and S.~Shamai, ``Compound wire-tap channels,''
  in \emph{Proc. Allerton Conference on Communication, Control, and Computing},
  Monticello, IL, USA, Sep. 2007.

\bibitem{GC01}
N.~Jindal, U.~Mitra, and A.~Goldsmith, ``Capacity of ad-hoc networks with node
  cooperation,'' in \emph{Proc.\ Int.\ Symp.\ on Information Theory}, Chicago,
  IL, USA, Jun. 2004, p. 271.

\bibitem{GC00}
H.~Vishwanathan, S.~Venkatesa, and H.~Huang, ``Downlink capacity evaluation of
  cellular networks with known interference cancellation,'' \emph{{IEEE} J.
  Select. Areas Commun.}, vol.~21, no.~5, pp. 802--811, Jun. 2003.

\bibitem{WS00}
W.~Saad, Z.~Han, M.~Debbah, and A.~Hj{\o}rungnes, ``A distributed merge and
  split algorithm for fair cooperation in wireless networks,'' in \emph{Proc.\
  Int.\ Conf.\ on Communications, Workshop on Cooperative Communications and
  Networking}, Beijing, China, May 2008.

\bibitem{CSI00}
M.~Bloch, J.~O. Barros, M.~R.~D. Rodrigues, and S.~W. McLaughlin, ``Wireless
  information-theoretic security,'' \emph{{IEEE} Trans. Inform. Theory},
  vol.~54, no.~6, pp. 2515--2534, Jun. 2008.

\bibitem{Game_theory2}
R.~B. Myerson, \emph{Game Theory, Analysis of Conflict}.\hskip 1em plus 0.5em
  minus 0.4em\relax Cambridge, MA, USA: Harvard University Press, Sep. 1991.

\bibitem{WIOPT}
W.~Saad, Z.~Han, T.~Ba\c{s}ar, M.~Debbah, and A.~Hj{\o}rungnes, ``Physical
  layer security: Coalitional games for distributed cooperation,'' in
  \emph{Proc. 7th Int. Symp. on Modeling and Optimization in Mobile, Ad Hoc,
  and Wireless Networks (WiOpt)}, Seoul, South Korea, Jun. 2009.

\bibitem{CF04}
D.~Ray, \emph{A Game-Theoretic Perspective on Coalition Formation}.\hskip 1em
  plus 0.5em minus 0.4em\relax New York, USA: Oxford University Press, Jan.
  2007.

\bibitem{KA01}
K.~Apt and A.~Witzel, ``A generic approach to coalition formation,'' in
  \emph{Proc. of the Int. Workshop on Computational Social Choice (COMSOC)},
  Amsterdam, the Netherlands, Dec. 2006.

\bibitem{Zhu_book}
Z.~Han and K.~J.~R. Liu, \emph{Resource Allocation for Wireless Networks:
  Basics, Techniques, and Applications}.\hskip 1em plus 0.5em minus 0.4em\relax
  Cambridge University Press, 2008.

\end{thebibliography}

\end{document}